\documentclass{ifacconf}

\usepackage{graphicx}      
\usepackage{natbib}        

\usepackage{graphicx}
  \graphicspath{{./Pics/}}
  \DeclareGraphicsExtensions{.pdf}
\usepackage{subfigure}
\usepackage{epstopdf}
\usepackage{float}
\usepackage{amsmath}
\usepackage{amsfonts}
\usepackage{epsfig}
\usepackage{graphicx}
\usepackage{arydshln}
\usepackage{algorithm}
\usepackage{algorithmicx}
\usepackage{algpseudocode}
\usepackage{verbatim}
\usepackage{subfigure}
\usepackage{enumerate}
\usepackage{rotating}
\usepackage{threeparttable}
\usepackage{caption}
\usepackage{booktabs} 
\usepackage{IEEEtrantools}
\usepackage{amssymb,latexsym}
\usepackage{mathrsfs}
\usepackage{dsfont}
\usepackage{caption}
\usepackage{color}
\usepackage{multirow}

\newenvironment{proof}{
  \noindent\textbf{Proof.} 
}{
  \hfill\rule{1.5mm}{1.5mm}\par
}
\def \T{{\mbox{\tiny T}}}
\def \argmin{\operatorname*{argmin}}

\usepackage{tikz}
\usepackage[many]{tcolorbox}
\usetikzlibrary{calc}

\usetikzlibrary{positioning,shapes}
\usetikzlibrary{arrows}

\tcbuselibrary{skins}
\newtcolorbox{resp}[1][]{%
	enhanced jigsaw,%
	colback=gray!5!white,%
	colframe=gray!80!black,%
	size=small,%
	boxrule=1pt,%
	halign title=flush center,%
	coltitle=black,%
	breakable,%
	drop shadow=black!50!white,%
	attach boxed title to top left={xshift=1cm,yshift=-\tcboxedtitleheight/2,yshifttext=-\tcboxedtitleheight/2},%
	minipage boxed title=3cm,%
	boxed title style={%
		colback=white,%
		size=fbox,%
		boxrule=1pt,%
		boxsep=2pt,%
		underlay={%
			\coordinate (dotA) at ($(interior.west) + (-0.5pt,0)$);
			\coordinate (dotB) at ($(interior.east) + (0.5pt,0)$);
			\begin{scope}[gray!80!black]
				\fill (dotA) circle (2pt);
				\fill (dotB) circle (2pt);
			\end{scope}
		}%
	},%
	#1%
}

\newtheorem{theorem}{Theorem}[section]

\newtheorem{corollary}[theorem]{Corollary}

\newtheorem{definition}[theorem]{Definition}
\newtheorem{example}{Example}

\newtheorem{remark}[theorem]{Remark}
\newtheorem{assumption}[theorem]{Assumption}
\numberwithin{equation}{section}

\begin{document}
\begin{frontmatter}
\title{Hierarchical Control for Continuous-time Systems via General Approximate Alternating Simulation Relations\thanksref{footnoteinfo}} 

\thanks[footnoteinfo]{
This work is supported by the Guangdong Provincial Project (No. 2024QN11X053), the Guangdong Basic and Applied Basic Research Foundation (No. 2026A1515010222) and by the Youth S$\&$T Talent Support Programme of GDSTA (No. SKXRC2025468). (Corresponding author: Bingzhuo Zhong.)
}

\author[HKUST(GZ)]{Zhiyuan Huang},
\author[HKUST(GZ)]{Shuo Li},
\author[UCB]{Murat Arcak},
 \author[UC]{Majid Zamani},
 \author[HKUST(GZ)]{Bingzhuo Zhong}


\address[HKUST(GZ)]{Thrust of Artificial Intelligence, The Hong Kong University of Science and Technology (Guangzhou), China (e-mail: zhuang655@connect.hkust-gz.edu.cn; bingzhuoz@hkust-gz.edu.cn; sli430@connect.hkust-gz.edu.cn ).}
\address[UCB]{Department of Electrical Engineering and Computer Sciences, University of California, Berkeley, USA (e-mail: arcak@berkeley.edu)}
\address[UC]{Department of Computer Science, University of Colorado Boulder, USA (e-mail: majid.zamani@colorado.edu)}

\begin{abstract}                
This paper introduces a general approximate alternating simulation relation (\emph{$\varepsilon$-gAAS relation})  for continuous-time systems, which relaxes existing simulation relations to tolerate larger mismatches between abstract and concrete models. 
The definition of gAAS for continuous-time systems is first proposed, and its properties are investigated.
Then, a control refinement method is developed to enable hierarchical control for the gAAS relation. Finally, case studies demonstrate the effectiveness of the proposed approach, highlighting its advantages over existing methods.

\end{abstract}

\begin{keyword}
Hierarchical control, simulation relation, control refinement
\end{keyword}

\end{frontmatter}

\section{Introduction}
\vspace{-5pt}
In recent decades, cyber-physical systems (CPS) have emerged as a key technology enabling the development of a smarter, more interconnected world. As these systems and their task specifications grow ever more complex, designing suitable controllers has become increasingly difficult, which in turn has spurred interest in hierarchical control as a way to tackle this problem.

The general structure of hierarchical control is depicted in Fig. \ref{fig:Hierarchical}.
The system that is actually being controlled is called the \emph{concrete system}, denoted by $\Sigma$, whereas its simplified model is referred to as the \emph{abstract system}, denoted by $\hat{\Sigma}$.
Instead of designing a controller directly for the concrete system, hierarchical control first develops a high-level controller for the abstract system and then translates it to the concrete system via control refinement. This procedure guarantees that the desired behavior of the concrete system is obtained by maintaining specified relations between the abstract and concrete systems.
While hierarchical control streamlines controller design by concentrating on the abstract model of the system, devising robust control refinement techniques that preserve the necessary relations continues to be a major challenge.
\begin{figure}[htbp]
	\centering
	\includegraphics[width=0.38\textwidth]{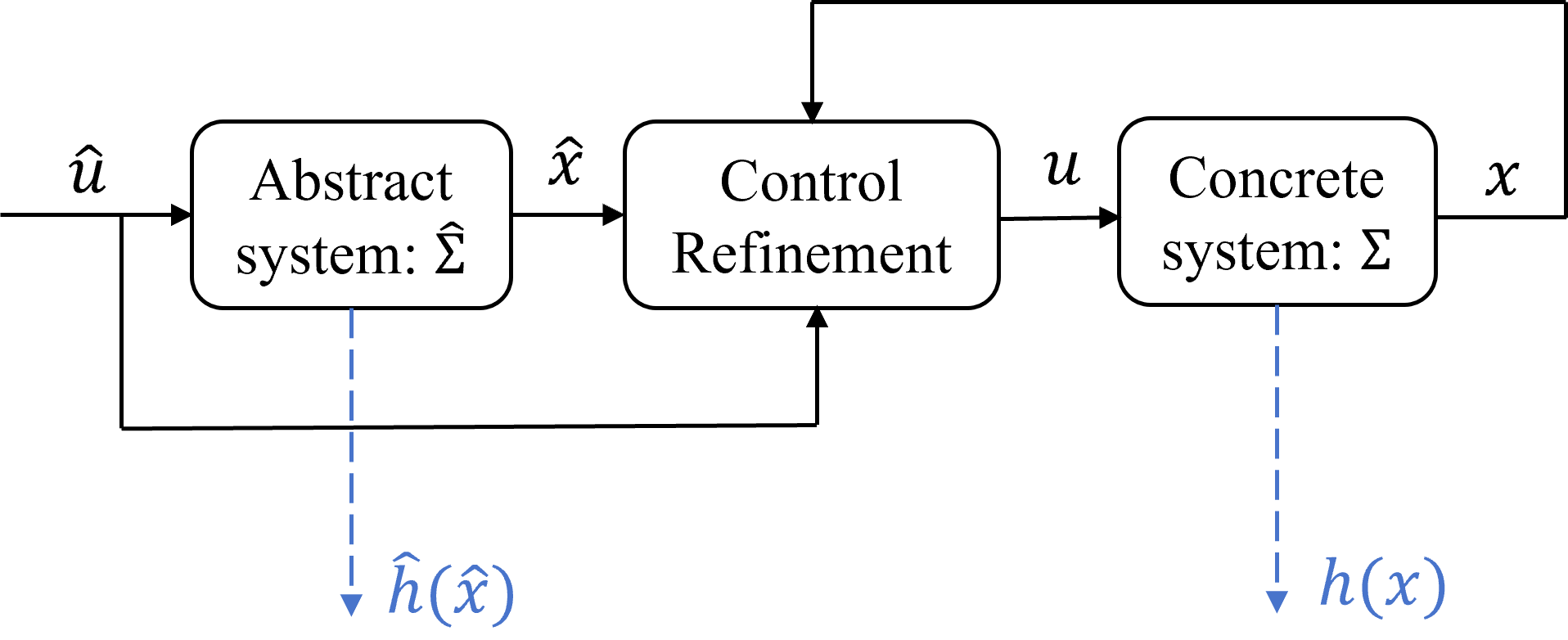}
    \vspace{-4pt}
	\caption{Hierarchical control system architecture.}
    \vspace{4pt}
	\label{fig:Hierarchical}
\end{figure}

In recent years, different types of system relations and their associated control refinement methods have been explored.
The framework of exact bisimulation relations was utilized in \citep{tabuada2009verification} and has since been investigated for discrete event systems in \citep{farhat2020control, zhang2017infinite} and for continuous systems in \citep{pappas2003bisimilar, van2004equivalence}.
Furthermore, the notion of the feedback refinement relation was developed in \citep{reissig2016feedback}, where a symbolic control refinement approach was also proposed.

Beyond the previously discussed relations, a more relaxed notion, termed the approximate (alternating) simulation relation, was proposed in \citep{girard2007approximation, girard2006hierarchical}.
The key idea behind the approximate simulation relation is that, as long as the outputs of the concrete and abstract systems remain within a prescribed error bound, this closeness is preserved over time or along system transitions through suitable control refinement.
For continuous-time control systems, a standard control refinement strategy under an approximate simulation relation is to construct a Lyapunov-like simulation function together with an interface function that maps inputs of the abstract system to those of the concrete system, thereby guaranteeing the desired stability properties for the interconnected concrete–abstract system.
In line with this concept, several simulation functions and their corresponding interface functions based on approximate simulation relations have been introduced for both linear systems \citep{girard2009hierarchical,girard2011approximate} and nonlinear systems \citep{smith2019continuous}.
Nevertheless, the existing conditions for using Lyapunov-like simulation functions in control refinement under approximate (alternating) simulation relations remain conservative, which in turn imposes limitations on achievable control performance \citep{zhong2024hierarchical}.

In recent years, several variants of the approximate simulation relation have been put forward, including approximate probabilistic relations in \citep{zhong2023automata} and neural simulation relations in \citep{nadali2024transfer}.
Beyond these developments, a more relaxed notion, called the general approximate alternating simulation relation, was presented in \citep{zhong2024hierarchical}, where control refinement techniques were also explored for both discrete-event systems and discrete linear systems.
Nonetheless, the class of system models examined in \citep{zhong2024hierarchical} remains rather restricted.

Motivated by the limitations of existing studies, this paper focuses on constructing simulation functions and addressing the associated control refinement problem under a general approximate alternating simulation relation for continuous-time systems.  
The main contributions are summarized as follows:  
(i) the general approximate alternating simulation relation (gAAS) for continuous-time systems is formally introduced and its properties are analyzed;  
(ii) Lyapunov-like simulation functions tailored to this general approximate alternating simulation relation are proposed for continuous-time systems, together with the corresponding control refinement methods.

The main contributions of this work, relative to existing literature, can be outlined as follows.
In contrast to \citep{zhong2024hierarchical}, this paper examines the construction of simulation functions and the associated control refinement under a general approximate simulation relation for continuous-time systems subject to input constraints.
Compared with \citep{girard2009hierarchical}, the proposed control refinement framework allows more relaxed input constraints on the abstract system and additionally incorporates the possibility of discontinuous inputs in the abstract model. 

The remainder of the paper is structured as follows.
Section \ref{Sec:Preliminary} presents the preliminaries, including the notation, models, and problem formulation, together with a motivating example.
Section \ref{section: gaas} introduces the general approximate alternating simulation relation and discusses its main properties.
Section \ref{Hierarchical_linear} proposes a hierarchical control framework built upon the simulation function.
Section \ref{case study} reports case studies that illustrate the effectiveness of the proposed method.
Finally, Section \ref{conclusion} summarizes and concludes the paper.

\section{Preliminary and problem formulation} \label{Sec:Preliminary}

\subsection{Notations}
\vspace{-6pt}
We denote the set of real numbers by $\mathbb{R}$.
Subscripts are employed to specify restrictions.
For example, $\mathbb{R}_{>0}$ denotes the set of positive real numbers.
For $a, b \in \mathbb{R}$ with $a \leq b$, the closed, open, and half-open intervals in $\mathbb{R}$ are denoted by $[a,b]$, $(a,b)$, $[a,b)$, and $(a,b]$, respectively.
The notation $\mathbb{R}^{n}$ refers to the $n$-dimensional Euclidean vector space. 
The Cartesian product of sets $X_i$, $i \in \{1, \dots, N\}$, is defined as $X_1 \times \ldots \times  X_N =\{(x_1,\cdots,x_N) \mid x_i \!\in\! X_i, i\in \{1, \dots, N\} \}$. 
Given sets $X_i$, $i \in \{1, \dots, N\}$, and their Cartesian product $X_1 \times \cdots \times X_N$, 
the projection onto $X_i$ is denoted by the mapping $\text{Proj}_{X_i}: X\rightarrow X_i$.
For a matrix $A$, we use $A^{\T}$ to denote its transpose, and for positive semidefinite $A$, $A^{1/2}$ denotes its matrix square root.
Denote by $\| \cdot\|$ the Euclidean norm for vectors and the spectral norm for matrices.
Let $\mathcal{B}_{m}(b)$ be the closed Euclidean ball in $\mathbb{R}^m$ centered at the origin with radius $b$, that is, $\mathcal{B}_{m}(b) := \{ x \in \mathbb{R}^m : \|x\| \le b \}$.
Finally, the notation $A \succ 0$ (resp. $A \succeq 0$) indicates that $A$ is positive definite (resp. positive semidefinite).

\subsection{Preliminary}
In this paper, we are interested in continuous-time control systems, as described in the following definition.
\begin{definition}\label{def:ct-system}
A \emph{continuous-time control system} is defined as a tuple $\Sigma := (X, X_0, U, \mathsf{U}, f, Y, h)$, where $X\subseteq\mathbb{R}^{n}$ is the continuous state space, $X_0 \subseteq X$ is the set of initial states, $U$ is the continuous input space, and $Y$ is the output space. 
Additionally, 
the system dynamics are modeled as $f:  X\times U\to 2^{\mathbb{R}^{n}}$, and the output function is given by $h: X \to Y $.
The admissible input map is 
\(\mathsf{U}: X \to U\), where \(\mathsf{U}(x)\) specifies the set of admissible inputs at state \(x\). 
In particular, if the input constraints are state-independent, one can write \(\mathsf{U}(x)\equiv U\) for all \(x\in X\).
\end{definition}

According to the Definition \ref{def:ct-system}, the continuous-time control system $\Sigma$ can be equivalently described by: 
\begin{align}\label{eq:Ct-system-general}
    \Sigma:\left\{
    \begin{array}{rl}
	\dot{x}(t) \in & f(x(t),u(t)),\\
	y(t)=&h(x(t)),
    \end{array}
	\right.
\end{align}
with $x(t)\in X$, $u(t)\in \mathsf{U}(x(t))\subseteq U$, and $y(t)\in Y$.
A trajectory of the system corresponding to a measurable input $u: [t_0,+\infty) \to U$ is a measurable function defined as $\xi_{x,u}: [t_{0},+\infty) \to X$ with initial conditions $\xi_{x,u}(t_{0}) \in X_{0}$ and $t_{0}\in \mathbb{R}_{\ge0}$, satisfying $\dot{\xi}_{x,u}(t) \in f(\xi_{x,u}(t), u(t))$ for almost everywhere (a.e.) $t \in [t_0,+\infty)$.
Since $f$ is set-valued, there may exist multiple trajectories corresponding to the same initial condition and input. 
The set of all such trajectories is denoted by
\begin{align}
   & \mathcal{T}_{X_0}(u) = \{ \xi_{x,u}: [t_0,+\infty) \to X \mid \dot{\xi}_{x, u} \in f(x, u) \nonumber \\
   & ~~~~~~~~~~~~~~~~~~~~~\text{ a.e. } t \in [t_0,+\infty),~  \xi_{x,u}(t_0) \in X_0 \}.
\end{align}
Since the input $u(t)$ can have discontinuities—for example, in piecewise-constant or switching control—we define the concepts of jump time and jump value to characterize these discontinuities.
\begin{definition} \label{def: jump value and set}
Consider $u : [t_{0},+\infty) \to \mathbb{R}^{m}$ be an input that may be discontinuous. 
A time $\tau \ge t_{0}$ is called a \emph{jump time} of $u$ if the left and right limits exist and satisfy $u(\tau^{+}) \ne u(\tau^{-})$.
The corresponding \emph{jump value} of $u$ at $\tau$ is defined as $\delta(\tau) := u(\tau^{+}) - u(\tau^{-})$.
\end{definition}

Next, following \citep[Definition 4.19]{tabuada2009verification}, we introduce the notion of an $\varepsilon$-approximate alternating simulation relation between two continuous-time systems as follows.
\begin{definition}\label{definition: as}
    Consider two continuous-time systems $\Sigma = (X, X_0, U, \mathsf{U}, f,  Y, h)$ and $\hat{\Sigma}=(\hat{X}, \hat{X}_0, \hat{U},  \hat{\mathsf{U}}, \hat{f}, Y, \hat{h})$, and a constant $\varepsilon \in \mathbb{R}_{\geq 0}$.
    A relation $\mathcal{R}\subseteq \hat{X} \times X$ is an \emph{$\varepsilon$-approximate alternating simulation relation} ($\varepsilon$-AAS relation) from $\hat{\Sigma}$ to $\Sigma$, if 
    \begin{enumerate}[(i)]
        \item $\forall \hat{x}_0 \in \hat{X}_0$, $\exists x_0 \in X_0$ such that $(\hat{x}_0,x_0)\in\mathcal{R}$;
        \item $\forall t \ge t_{0}$, $\forall (\hat{\xi}_{\hat{x},\hat{u}}(t), \xi_{x,u}(t))\in\mathcal{R}$, it follows that \linebreak
        $\mathbf{d}(\hat{h}(\hat{\xi}_{\hat{x},\hat{u}}(t)),h(\xi_{x,u}(t)))\le \varepsilon$;
        \item
        $\forall (\hat{x}_{0},x_{0})\in\mathcal{R}$, $\forall \hat{u}(\cdot):[t_0,+\infty)\to \hat{U}$, $\exists u(\cdot):[t_0,+\infty)\to U$ such that $\forall \xi_{x,u} \in \mathcal{T}_{X_0}(u)$, $\exists \hat{\xi}_{\hat{x},\hat{u}} \in \mathcal{T}_{\hat{X}_0}(\hat{u})$, such that $\forall t \ge t_{0}$, $(\hat{\xi}_{\hat{x},\hat{u}}(t), \xi_{x,u}(t)) \in \mathcal{R}$,

    \end{enumerate}
    where $x_{0}: = x(t_{0})$,  $\hat{x}_{0}: = \hat{x}(t_{0})$ with $t_{0}\in \mathbb{R}_{\ge 0}$ being the initial time and $\mathbf{d}:  Y \times Y \!\rightarrow \!\mathbb{R}_{\geq 0}$ is a metric on $ Y$. 
    If there exists an $\varepsilon$-approximate alternating simulation relation from $\hat{\Sigma}$ to $\Sigma$, denoted by $\hat{\Sigma}\preceq^{\varepsilon}_{\mathcal{AS}}\Sigma $, we say that $\hat{\Sigma}$ is \emph{$\varepsilon$-approximately alternatively simulated by} $\Sigma$.
\end{definition}

\subsection{Motivation and Problem formulation}
In this subsection, we illustrate the limitation of the $\varepsilon$-AAS relation introduced in Definition \ref{definition: as} by means of a simple example.
\begin{example}
Consider the continuous-time concrete system $\Sigma_{} := (X_{}, X_{0}, U_{}, \mathsf{U}_{}, f, Y_{}, h)$, which is described by the model
\begin{equation} \label{case: linear concrete model}
  \Sigma_{}:  \left\{\begin{matrix}
\dot{x}_{}(t)
= \begin{bmatrix}
0  &1 \\
 0 &0
\end{bmatrix}x_{}(t) + \begin{bmatrix}
0   \\
 1 
\end{bmatrix}u_{}(t),\\
y_{}(t)=\begin{bmatrix}
1  &0
\end{bmatrix}x_{}(t),
\end{matrix}\right.
\end{equation}
where $x_{}(t) = [p_{}(t); v_{}(t)] \in X_{}$ is the state vector, with $p_{}(t)$ and $v_{}(t)$ denoting the position and velocity, respectively, and $u_{}(t) \in U $ denoting the acceleration serving as the control input.
Consider the corresponding abstract system $\hat{\Sigma}_{} = (\hat{X}_{}, \hat{X}_{0}, \hat{U}_{}, \hat{\mathsf{U}}_{}, \hat{f}, Y_{}, \hat{h})$ defined as 
\begin{equation}\label{case: linear abstract model}
  \hat{\Sigma}_{}:  \left\{\begin{matrix}
 \dot{\hat{x}}_{}(t) = \hat{u}_{}(t),\\
\hat{y}_{}(t)=\hat{x}_{},
\end{matrix}\right.
\end{equation}
where $\hat{x}_{}(t) = \hat{p}_{}(t)$ denotes the position, and $\hat{u}_{}(t)\in  \hat{U}$ denotes the velocity, which serves as the control input of $\hat{\Sigma}_{}$.
To characterize the similarity between $\Sigma$ and $\hat{\Sigma}_{}$ with an approximate alternating simulation relation as in Definition \ref{definition: as}, we consider the output error (output metric) $e(t)= \mathbf{d}(y,\hat y) =y(t) -\hat{y}(t) $.
From \eqref{case: linear concrete model} and \eqref{case: linear abstract model}, the error dynamics can be expressed as
\begin{equation*}
\dot{e}(t) = v(t) - \hat{u}(t), \quad \dot{v}(t) =u (t).
\end{equation*}
One can see that fast or discontinuous variations in the input $\hat{u}(t)$ directly influence the error rate $\dot{e}(t)$, which in turn forces the velocity $v(t)$ to adjust in order to keep the error $e(t)$ small.
However, because $v(t)$ can change only through the bounded acceleration $u(t) \in U$, it cannot react quickly enough to follow such abrupt variations in $\hat{u}(t)$.
As a consequence, the discrepancy $\dot{e}(t) = v(t) - \hat{u}(t)$ becomes large whenever $\hat{u}(t)$ changes rapidly.
Thus, the output mismatch cannot be kept uniformly below a small constant.
Equivalently, the parameter $\varepsilon$ in condition \textit{(ii)} of Definition \ref{definition: as} must take a large value when a rapidly varying input $\hat{u}$ is used, which complicates the control refinement.

\end{example}

Motivated by the constraints of the $\varepsilon$-AAS relation, this paper introduces a more relaxed form of alternating simulation relation for continuous-time systems, referred to as the general approximate alternating simulation relation, and examines how it relates to the conventional approximate alternating simulation relation. We then construct a hierarchical control framework for linear continuous-time systems based on the proposed general approximate alternating simulation relation, making use of the newly introduced simulation functions together with the interface function.

\section{General Approximate Alternating Simulation Relation} \label{section: gaas}
\vspace{-5pt}
The proposed general approximate alternating simulation relation for continuous-time systems is defined as follows.
\begin{definition}\label{def:cs-gAAS}
    Consider two continuous-time systems $\Sigma = (X, X_0, U, \mathsf{U}, f, Y, h)$ and $\hat{\Sigma}=(\hat{X}, \hat{X}_0, \hat{U},\hat{\mathsf{U}}, \hat{f},Y,\hat{h})$, and a constant $\varepsilon \in \mathbb{R}_{\geq 0}$.
   A relation $\mathcal{R}_{g} \subseteq X \times \hat{X} \times \hat{U}$ is said to be a \emph{general $\varepsilon$-approximate alternating simulation relation} (\emph{$\varepsilon$-gAAS relation}) from $\hat{\Sigma}$ to $\Sigma$ if the following conditions hold:
    \begin{enumerate}[(i)]
        \item $\forall \hat{x}_0 \in \hat{X}_0$, $\exists x_0 \in X_0$, $\exists \hat{u}_{0} \in \hat{U}$ such that $(x_0,\hat{x}_0, \hat{u}_{0})\in\mathcal{R}_{g}$;
        \item $\forall t \ge t_{0}$, $\forall ( \xi_{x,u}(t), \hat{\xi}_{\hat{x},\hat{u}}(t), \hat{u}(t))\in\mathcal{R}_{g}$, it follows that $\mathbf{d}(h(\xi_{x,u}(t)), \hat{h}(\hat{\xi}_{\hat{x},\hat{u}}(t)))\leq \varepsilon$;
        \item
         $\forall (x_{0},\hat{x}_{0}, \hat{u}_{0})\in\mathcal{R}_{g}$, $\exists \hat{u}(\cdot):[t_0,+\infty)\to \hat{U}$, $\exists u(\cdot):[t_0,+\infty)\to U$ such that $\forall \xi_{x,u}\in \mathcal{T}_{X_0}(u)$, $\exists \hat{\xi}_{\hat{x},\hat{u}}\in \mathcal{T}_{\hat{X}_0}(\hat{u})$, such that $\forall t \ge t_{0}$, $ ( \xi_{x,u}(t), \hat{\xi}_{\hat{x},\hat{u}}(t), \hat{u}(t)) \in \mathcal{R}_{g}$,
    \end{enumerate}
     where $x_{0}: = x(t_{0})$,  $\hat{x}_{0}: = \hat{x}(t_{0})$, $\hat{u}_{0}: = \hat{u}(t_{0})$ with $t_{0}\in \mathbb{R}_{\ge 0}$ being the initial time and  $\mathbf{d}:  Y \times Y \!\rightarrow \!\mathbb{R}_{\geq 0}$ is a metric on $ Y$. 
    If there exists an $\varepsilon$-gAAS relation from $\hat{\Sigma}$ to $\Sigma$, denoted by $\hat{\Sigma}\preceq^{\varepsilon}_{g\mathcal{AS}}\Sigma$, $\hat{\Sigma}$ is said to be \emph{generally $\varepsilon$-approximately alternatively simulated by $\Sigma$}.
\end{definition}
Definition~\ref{def:cs-gAAS} implicitly guarantees that for any $(x_0,\hat{x}_0,\hat{u}_0)\in\mathcal{R}_{g}$, there exists an interface function $u : X \times \hat{X} \times \hat{U} \to U$ such that $\hat{\Sigma}\preceq^{\varepsilon}_{g\mathcal{AS}}\Sigma$. 
We now introduce a simulation function corresponding to the $\varepsilon$-gAAS relation.
\begin{definition} \label{def: simulation function}
   Consider two continuous-time systems $\Sigma = (X, X_0, U, \mathsf{U}, f, Y, h)$ and $\hat{\Sigma} = (\hat{X}, \hat{X}_0, \hat{U}, \hat{\mathsf{U}}, \hat{f}, Y, \hat{h})$. Let $V: X \times \hat{X} \times \hat{U} \to \mathbb{R}_{\ge 0}$ be a continuously differentiable map, and define
\[
\mathcal{V} := \{ (x, \hat{x}, \hat{u}) \in X \times \hat{X} \times \hat{U} \mid V(x, \hat{x}, \hat{u}) \le \varepsilon \}.
\]
We call \(V\) a \emph{simulation function} from \(\hat{\Sigma}\) to \(\Sigma\) if the set \(\mathcal{V}\) is an \(\varepsilon\)-gAAS relation.
\end{definition}
Building on \citep[Theorem 3.2]{zhong2024hierarchical}, we present the following results to show that the $\varepsilon$-gAAS relation is strictly more expressive than the $\varepsilon$-AAS relation introduced in Definition~\ref{definition: as}.
\begin{theorem}\label{thm: g-ass more general}
    Consider continuous-time systems $\Sigma=(X, X_0, U,\mathsf{U},$ $f,Y,h)$ and $\hat{\Sigma}=(\hat{X}, \hat{X}_0, \hat{U},\hat{\mathsf{U}}, \hat{f},Y,\hat{h})$, 
    and a constant $\varepsilon \in \mathbb{R}_{\geq 0}$. 
    One has $\hat{\Sigma}\preceq^{\varepsilon}_{\mathcal{AS}}\Sigma \Rightarrow \hat{\Sigma}\preceq^{\varepsilon}_{g\mathcal{AS}}\Sigma$.
\end{theorem}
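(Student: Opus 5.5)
The plan is to build an $\varepsilon$-gAAS relation directly from a given $\varepsilon$-AAS relation by adding a free input coordinate. Suppose $\hat{\Sigma}\preceq^{\varepsilon}_{\mathcal{AS}}\Sigma$ is witnessed by $\mathcal{R}\subseteq\hat{X}\times X$. We then define
\begin{align*}
\mathcal{R}_{g} := \{ (x,\hat{x},\hat{u}) \in X\times\hat{X}\times\hat{U} \mid (\hat{x},x)\in\mathcal{R} \},
\end{align*}
so the abstract input component is left unconstrained. We then verify conditions (i)--(iii) of Definition~\ref{def:cs-gAAS} one at a time, each from the matching condition of Definition~\ref{definition: as}.

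For condition (i), take $\hat{x}_0\in\hat{X}_0$. Condition (i) of the $\varepsilon$-AAS relation gives some $x_0\in X_0$ with $(\hat{x}_0,x_0)\in\mathcal{R}$. Pick any $\hat{u}_0\in\hat{U}$; we assume $\hat{U}\neq\emptyset$, since otherwise no abstract input signals exist and the statement is vacuous. Then $(x_0,\hat{x}_0,\hat{u}_0)\in\mathcal{R}_g$.

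Condition (ii) follows immediately from the definition of $\mathcal{R}_g$. If $(\xi(t),\hat{\xi}(t),\hat{u}(t))\in\mathcal{R}_g$, then $(\hat{\xi}(t),\xi(t))\in\mathcal{R}$. Condition (ii) of Definition~\ref{definition: as} then gives $\mathbf{d}(h(\xi(t)),\hat{h}(\hat{\xi}(t)))\le\varepsilon$, using the symmetry of the metric $\mathbf{d}$.

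Condition (iii) is the step that needs the most care, because the quantifiers differ. Definition~\ref{definition: as} is universal in $\hat{u}(\cdot)$, while Definition~\ref{def:cs-gAAS} only asks that some $\hat{u}(\cdot)$ exist, and additionally requires $\hat{u}(t)$ to be tracked inside the relation. Take $(x_0,\hat{x}_0,\hat{u}_0)\in\mathcal{R}_g$. We choose a specific abstract input, for instance the constant signal $\hat{u}(t)\equiv\hat{u}_0$ on $[t_0,+\infty)$. This is measurable, and it matches the convention $\hat{u}_0=\hat{u}(t_0)$. Since $(\hat{x}_0,x_0)\in\mathcal{R}$, condition (iii) of the $\varepsilon$-AAS relation, applied to this $\hat{u}$, gives an input $u(\cdot)$ with the following property: for every $\xi_{x,u}\in\mathcal{T}_{X_0}(u)$ there is $\hat{\xi}_{\hat{x},\hat{u}}\in\mathcal{T}_{\hat{X}_0}(\hat{u})$ such that $(\hat{\xi}(t),\xi(t))\in\mathcal{R}$ for all $t\ge t_0$.

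Because the $\hat{u}$-coordinate of $\mathcal{R}_g$ is unrestricted, it follows that $(\xi(t),\hat{\xi}(t),\hat{u}(t))\in\mathcal{R}_g$ for all $t\ge t_0$. This establishes condition (iii) and hence $\hat{\Sigma}\preceq^{\varepsilon}_{g\mathcal{AS}}\Sigma$.

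The main point to check is that the quantifier exchange is legitimate: a $\forall\hat{u}$ statement trivially implies an $\exists\hat{u}$ statement once a single admissible $\hat{u}$ is exhibited, and the constant signal serves that purpose. If admissibility $\hat{u}(t)\in\hat{\mathsf{U}}(\hat{\xi}(t))$ were required, we would instead pick $\hat{u}_0$ and the signal compatibly with $\hat{\mathsf{U}}$, or equivalently restrict $\mathcal{R}_g$ to triples with $\hat{u}\in\hat{\mathsf{U}}(\hat{x})$. Since the definitions as stated only require $\hat{u}:[t_0,+\infty)\to\hat{U}$, the constant choice suffices.
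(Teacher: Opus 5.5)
Your proof is correct. The paper states this theorem without a proof of its own, deferring to \citep[Theorem 3.2]{zhong2024hierarchical}, and your argument is the natural one for this implication: lift to $\mathcal{R}_g=\{(x,\hat{x},\hat{u}) \mid (\hat{x},x)\in\mathcal{R},\ \hat{u}\in\hat{U}\}$, then specialize the universal quantifier over $\hat{u}(\cdot)$ in condition (iii) of Definition~\ref{definition: as} to the constant signal $\hat{u}\equiv\hat{u}_0$, which also respects the convention $\hat{u}_0=\hat{u}(t_0)$. One correction: the case $\hat{U}=\emptyset$ is not vacuous, because condition (iii) of Definition~\ref{definition: as} then quantifies over no signals and an $\varepsilon$-AAS relation can exist, while condition (i) of Definition~\ref{def:cs-gAAS} fails whenever $\hat{X}_0\neq\emptyset$, so $\hat{U}\neq\emptyset$ is a genuine implicit assumption of the theorem rather than a harmless edge case.
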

Theorem \ref{thm: g-ass more general} asserts that if there exists an $\varepsilon$-approximate alternating simulation relation from $\hat{\Sigma}$ to $\Sigma$, then an $\varepsilon$-gAAS relation from $\hat{\Sigma}$ to $\Sigma$ also exists.
However, if a system $\hat{\Sigma}$ is only known to be generally $\varepsilon$-approximately alternating simulated by $\Sigma$, this does not necessarily ensure the existence of an $\varepsilon$-approximate alternating simulation relation from $\hat{\Sigma}$ to $\Sigma$.
A counterexample is given in Section~\ref{case study}, where $\hat{\Sigma}_{} \preceq_{g\mathcal{AS}}^{0.5} \Sigma_{}$ holds, but $\hat{\Sigma}_{} \preceq_{\mathcal{AS}}^{0.5} \Sigma_{}$ fails to hold.

\vspace{-5pt}
\section{Hierarchical control for continuous-time linear systems under $\varepsilon$-gAAS relation} \label{Hierarchical_linear}
\vspace{-5pt}
In this section, we introduce a hierarchical control framework for continuous-time linear systems that satisfies the $\varepsilon$-gAAS relation.
The concrete system $\Sigma = (X, X_{0}, U, \mathsf{U}, f, $ $Y, h)$ is defined as
\begin{equation} \label{eq: linear-concrete}
  \Sigma:  \left\{\begin{matrix}
 \dot{x}(t) = Ax(t) + Bu(t)\\
y(t)=Cx(t),
\end{matrix}\right.
\end{equation}
where $x(t) \in X\subseteq \mathbb{R}^{n}$, $A \in \mathbb{R}^{n \times n}$, $u(t) \in \text{U}(x(t))=U \subseteq \mathbb{R}^{m}$, $B \in \mathbb{R}^{n\times m}$, $C\in \mathbb{R}^{p\times n}$, and $y(t) \in Y \subseteq \mathbb{R}^{p}$.
We denote the associated continuous-time abstraction of the system $\hat{\Sigma} = (\hat{X}, \hat{X}_{0}, \hat{U}, \hat{\mathsf{U}}, \hat{f}, Y, \hat{h})$ by
\begin{equation}\label{eq: linear-abstract}
  \hat{\Sigma}:  \left\{\begin{matrix}
 \dot{\hat{x}}(t) = \hat{A}\hat{x}(t) + \hat{B}\hat{u}(t)\\
\hat{y}(t)=\hat{C}\hat{x}(t),
\end{matrix}\right.
\end{equation}
where $\hat{x}(t) \in \hat{X} \subseteq\mathbb{R}^{n_{r}}$, $\hat{A}\in \mathbb{R}^{n_{r} \times n_{r}}$, $\hat{u}(t) \in \hat{\text{U}}(\hat{x}(t)) =\hat{U} \subseteq \mathbb{R}^{m_{r}}$, $\hat{B} \in \mathbb{R}^{n_{r}\times m_{r}}$, $ \hat{C}\in \mathbb{R}^{p\times n_{r}}$ and $y(t) \in Y \subseteq \mathbb{R}^{p}$, with $n_{r} \le n$, $m_{r} \le m$.

Assume that $h(x) = y(t)$ and $\hat{h}(\hat{x}) = \hat{y}(t)$, and consider $\mathbf{d}(h(x), \hat{h}(\hat{x})) = \| h(x) - \hat{h}(\hat{x})\|$.
We then introduce a method that uses a simulation function together with an explicit interface function to perform control refinement.

Consider a function of the form
\begin{equation} \label{eq: linear simulation function}
    V_{g}(x,\hat{x},\hat{u}) = \sqrt{(x - P\hat{x} - S\hat{u})^{\T} M (x - P\hat{x} - S\hat{u})},
\end{equation}
where $P \in \mathbb{R}^{n \times n_{r}}$, $S \in \mathbb{R}^{n \times m_{r}}$, and $M \in \mathbb{R}^{n \times n}$.
Define 
\begin{equation}
    \mathcal{V}_{g} : =  \{ (x,\hat{x},\hat{u}) \in X \times \hat{X}\times \hat{U} \mid V_{g}(x,\hat{x},\hat{u}) \le \varepsilon \}.
\end{equation}


Next, we define the control refinement function for the concrete system $\Sigma$, denoted by $u_{g}: X \times \hat{X} \times \hat{U}\to U$, as follows:
\begin{equation} \label{eq: linear explicit interface function}
    u_{g}(x,\hat{x},\hat{u}) =  K(x-P\hat{x}-S\hat{u})+Q\hat{x} + R\hat{u},
\end{equation}
with $K \in \mathbb{R}^{m \times n}$, $R \in \mathbb{R}^{m \times m_{r}}$, and $Q  \in \mathbb{R}^{m \times n_{r}}$.

Before presenting the sufficient conditions that ensure an $\varepsilon$-gAAS relation from $\hat{\Sigma}$ to $\Sigma$, we first introduce the following assumption. It captures the assumptions under which $V_{g}(x,\hat{x},\hat{u})$ in \eqref{eq: linear simulation function} and $u_{g}(x,\hat{x},\hat{u})$ in \eqref{eq: linear explicit interface function} qualify as a simulation function and its corresponding interface function, respectively, in the sense of Definition \ref{def: simulation function}.
\begin{assumption} \label{def: linear conditions}
    Consider the concrete continuous-time linear system $\Sigma = (X, X_{0}, U, \mathsf{U}, f, Y,h)$ in \eqref{eq: linear-concrete} and its corresponding abstract system $\hat{\Sigma} = (\hat{X}, \hat{X}_{0}, \hat{U}, \hat{\mathsf{U}}, \hat{f}, Y, \hat{h})$ in \eqref{eq: linear-abstract}, together with the function $V_{g}(x, \hat{x},\hat{u})$ and the function $u_{g}(x, \hat{x},\hat{u})$ specified as in \eqref{eq: linear simulation function} and \eqref{eq: linear explicit interface function}.
Let $\lambda_{min}(M)$ denote the smallest eigenvalue of a matrix $M$.
For some given $a_{1},b \in \mathbb{R}_{>0}$, we presume the following conditions hold:
    {
\allowdisplaybreaks
    \begin{align}
      & CP = \hat{C}, 
       ~CS =0_{p \times m_{r}},
       ~M =M^{\T}\succ 0,~
        C^{\T}C \preceq M \\
       & (A + BK)^{\T}M +  M(A+ BK)\prec -a_{1}M \label{eq: linear Lyapunov ineq}\\
       &Q^{}  =\argmin_{Q\in \mathbb{R}^{m \times n_{r}}}  \|M^{1/2} \cdot (AP -  P\hat{A} + BQ)\|  \label{eq: linear QRS select 1}  \\
       &R^{},S^{} \in \argmin_{\substack{R \in \mathbb{R}^{m \times m_r} \\ S \in \mathbb{R}^{n \times m_r}}}   \|M^{1/2}\cdot (AS + BR-P\hat{B})\|  \label{eq: linear QRS select 2}  \\
       & \|K\| \frac{\varepsilon}{\sqrt{\lambda_{min}(M)}} +\|Q \hat{x}\| + \|R\hat{u}\| \le b, ~ \mathcal{B}_{m}(b) \subseteq U  \label{eq: input constraint}
    \end{align}
    where the matrices $A$, $B$, $C$, $\hat{A}$, $\hat{B}$, and $\hat{C}$ are specified in \eqref{eq: linear-concrete} and \eqref{eq: linear-abstract}, and the matrices $P$, $S$, $M$, $K$, $Q$, and $R$ are introduced in \eqref{eq: linear simulation function} and \eqref{eq: linear explicit interface function}.
    }
\end{assumption}
Next, we derive an $\varepsilon$-gAAS relation from $\hat{\Sigma}$ to $\Sigma$ by invoking the following result, under the assumption that the abstract input $\hat{u}$ is continuous.
\begin{theorem} \label{theorem: linear valid simulation and interface}
Consider continuous-time linear systems $\Sigma$ and $\hat{\Sigma}$ described by \eqref{eq: linear-concrete} and \eqref{eq: linear-abstract}, respectively. 
Assume that the functions $V_g(x, \hat{x}, \hat{u})$ and $u_{g}(x, \hat{x}, \hat{u})$ given in \eqref{eq: linear simulation function} and \eqref{eq: linear explicit interface function} are employed, where the matrices and scalars $P$, $S$, $M$, $K$, $R$, $Q$, $a_{1}$, $b$, and $\lambda_{min}(M)$ satisfy all conditions stated in Assumption \ref{def: linear conditions}. 
Assume that
\begin{enumerate}[(i)]
    \item there exist $\bar{r}_{\max} \in \mathbb{R}_{\ge 0}$ and $a_1$ in \eqref{eq: linear Lyapunov ineq} such that $\bar{r}(\hat{x}, \hat{u}, \dot{\hat{u}}) \le \bar{r}_{\max}$ and $\frac{2\bar{r}_{\max}}{a_1} \le \varepsilon$, where
    \begin{align}
    &\bar{r}(\hat{x},\hat{u},\dot{\hat{u}}) =  \bar{r}_{1}\|\hat{x}(t)\| + \bar{r}_{2}\|\hat{u}(t)\| + \bar{r}_{3}\| \dot{\hat{u}}\|  \text{with} \nonumber \\
    &\bar{r}_{1}  =\min_{Q\in \mathbb{R}^{m \times n_{r}}}  \|M^{1/2} \cdot (AP -  P\hat{A} + BQ)\|  \label{eq: linear bounded var1} \\
    &\bar{r}_{2} = \min_{\substack{R \in \mathbb{R}^{m \times m_r} \\ S \in \mathbb{R}^{n \times m_r}}}   \|M^{1/2} \cdot (AS + BR-P\hat{B})\|   \label{eq: linear bounded var2}  \\
     & \bar{r}_{3} =\|M^{1/2} S\|;  \label{eq: linear bounded var3}
    \end{align}
    \item $\hat{u}(t)$ is continuous for all $t \in [t_{0}, +\infty)$; 
    \item for every $\hat{x}_0 \in \hat{X}_{0}$, there exists some $x_0 \in X_{0}$ such that $(x_0,\hat{x}_0) \in \text{Proj}_{X\times \hat{X}}\mathcal{V}_{g}$. 
\end{enumerate}
Then $V_g(x, \hat{x}, \hat{u})$ serves as a simulation function, and $\mathcal{V}_{g}$ constitutes an $\varepsilon$-gAAS relation when the interface function $u_g(x, \hat{x}, \hat{u})$ is applied.
\end{theorem}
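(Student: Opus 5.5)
We verify the three conditions of Definition~\ref{def:cs-gAAS} for $\mathcal{V}_g$, with $\mathcal{V}_g$ shown forward invariant under the interface $u_g$ by a Lyapunov argument on the error $z := x - P\hat{x} - S\hat{u}$, for which $V_g = \|M^{1/2}z\|$.

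Condition (i) follows from assumption (iii): for each $\hat{x}_0$ there exist $x_0$ and $\hat{u}_0$ with $(x_0,\hat{x}_0,\hat{u}_0)\in\mathcal{V}_g$. For condition (ii), since $CP=\hat{C}$ and $CS=0$, we have $y-\hat{y} = Cx - CP\hat{x} - CS\hat{u} = Cz$. Because $C^{\T}C\preceq M$, this gives $\|y-\hat{y}\|^2 = z^{\T}C^{\T}Cz \le z^{\T}Mz = V_g^2$. Hence $\|y-\hat{y}\|\le V_g\le\varepsilon$ on $\mathcal{V}_g$.

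Condition (iii) is the main work. For any abstract input $\hat{u}$ that is continuous (assumption (ii)), a.e.\ differentiable, and satisfies the bound in (i), set $u = u_g(x,\hat{x},\hat{u})$. Then
\[
\dot z = (A+BK)z + (AP - P\hat{A} + BQ)\hat{x} + (AS+BR-P\hat{B})\hat{u} - S\dot{\hat{u}} .
\]
Set $W = V_g^2 = z^{\T}Mz$. By \eqref{eq: linear Lyapunov ineq},
\[
\dot W \le -a_1 W + 2 z^{\T}M w, \qquad w := (AP-P\hat{A}+BQ)\hat{x} + (AS+BR-P\hat{B})\hat{u} - S\dot{\hat{u}} .
\]
By Cauchy--Schwarz, $z^{\T}Mw = (M^{1/2}z)^{\T}M^{1/2}w \le V_g\|M^{1/2}w\|$. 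Using the minimizing choices \eqref{eq: linear QRS select 1}--\eqref{eq: linear QRS select 2}, we get $\|M^{1/2}w\|\le \bar r_1\|\hat{x}\| + \bar r_2\|\hat{u}\| + \bar r_3\|\dot{\hat{u}}\| \le \bar r_{\max}$. Where $V_g>0$ this yields $\dot V_g \le -\tfrac{a_1}{2}V_g + \bar r_{\max}$.

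Since $V_g$ is continuous along trajectories (this is where continuity of $\hat{u}$ is essential, since $z$ contains $S\hat{u}$), a comparison argument applies. Whenever $V_g=\varepsilon$, we have $\dot V_g \le -\tfrac{a_1}{2}\varepsilon + \bar r_{\max}\le 0$ because $2\bar r_{\max}/a_1\le\varepsilon$. Therefore $V_g(t)\le\max\{V_g(t_0), 2\bar r_{\max}/a_1\}\le\varepsilon$ for all $t$, and the set $\mathcal{V}_g$ is forward invariant.

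It remains to check admissibility of the input. On $\mathcal{V}_g$ we have $\|z\|\le \varepsilon/\sqrt{\lambda_{\min}(M)}$, so
\[
\|u_g\| \le \|K\|\,\frac{\varepsilon}{\sqrt{\lambda_{\min}(M)}} + \|Q\hat{x}\| + \|R\hat{u}\| \le b
\]
by \eqref{eq: input constraint}, and thus $u_g\in\mathcal{B}_m(b)\subseteq U$. Since the dynamics are deterministic, the trajectory $\hat{\xi}$ is unique, and condition (iii) holds.

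The main obstacle is the invariance step. It requires care with non-smoothness of $V_g$ at $z=0$ and with $\dot{\hat u}$ existing only a.e., which is handled by working with $W$ or with the upper Dini derivative. A second subtlety is that the constraint \eqref{eq: input constraint} is only guaranteed while the state stays in $\mathcal{V}_g$. The two arguments therefore have to be run jointly: let $t^\ast$ be the first exit time from $\mathcal{V}_g$, and derive a contradiction from the derivative bound at $t^\ast$.
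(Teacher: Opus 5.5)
Your proposal is correct and follows essentially the same route as the paper. You use the same error $e=x-P\hat x-S\hat u$ and the same Lyapunov/Cauchy--Schwarz estimate $\dot V_g\le -\tfrac{a_1}{2}V_g+\bar r_{\max}$, followed by a comparison argument that yields $V_g(t)\le\max\{V_g(t_0),2\bar r_{\max}/a_1\}\le\varepsilon$ (the paper writes out the explicit exponential bound) and the same a posteriori check $\|u_g\|\le b$. Your extra care with $W=V_g^2$ and Dini derivatives is fine, though ``a.e.\ differentiable'' should read ``absolutely continuous'' for an a.e.\ comparison argument to hold. Also, the first-exit-time coupling is unnecessary: under $u_g$ the closed loop is a linear ODE whose solution exists regardless of $U$, so invariance can be proved first and admissibility read off afterwards, as the paper does.
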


\begin{proof}
We demonstrate that $\mathcal{V}_{g}$ is an $\varepsilon$-gAAS relation by verifying conditions \textit{(i)}--\textit{(iii)} in Definition \ref{def:cs-gAAS}. 
Because the matrix $M$ in \eqref{eq: linear simulation function} is chosen such that $C^{\T}C \preceq M$, and $P$ and $S$ satisfy the constraints $CP = \hat{C}$ and $CS = 0_{p \times m_{r}}$ according to  Assumption \ref{def: linear conditions}, it follows that, for any $(x,\hat{x},\hat{u}) \in \mathcal{V}_{g}$, one can verify
\begin{align}
    \mathbf{d}(h(x), \hat{h}(\hat{x})) &= \| h(x) - \hat{h}(\hat{x})\| \nonumber \\
    &=\| Cx - \hat{C} \hat{x}\| = \| Cx - CP \hat{x} - CS \hat{u}\| \nonumber \\
    &=\sqrt{(x-P\hat{x}-S\hat{u})^{\T}C^{\T}C(x-P\hat{x}-S\hat{u})} \nonumber \\
    &\le \sqrt{(x-P\hat{x}-S\hat{u})^{\T}M(x-P\hat{x}-S\hat{u})} \le \varepsilon.
\end{align}
Moreover, because for every $\hat{x}_{0} \in \hat{X}_{0}$ there exist $x_{0} \in X_{0}$ and $\hat{u}_{0} \in \hat{U}$ such that $(x_{0}, \hat{x}_{0}, \hat{u}_{0}) \in \mathcal{V}_{g}$, conditions \textit{(i)} and \textit{(ii)} in Definition \ref{def:cs-gAAS} are fulfilled.  
We now move on to verifying condition \textit{(iii)}.

First, suppose that $\hat{u}$ is continuous.
Let $e(t) = x(t) - P\hat{x}(t) - S\hat{u}(t)$. Then we obtain
\begin{align} \label{eq: linear dot e}
    \dot{e}(t) = &(A + BK)e(t) + (AP - P\hat{A} + BQ)\hat{x}(t)\nonumber \\ 
    &+ (AS + BR - P\hat{B})\hat{u}(t) - S\dot{\hat{u}}(t).
\end{align}
Then, $\dot{V}_{g}(x,\hat{x},\hat{u})$ can be expressed as
\begin{align}\label{eq: linear dot Ve}
    \dot{V}_{g}(x,\hat{x},\hat{u}) 
    & = \frac{1}{2} \big(e^{\T}(t)Me(t)\big)^{-\frac{1}{2}} \big[\dot{e}^{\T}(t)Me(t) + e^{\T}(t)M\dot{e}(t)\big] \nonumber \\
    & = \frac{\dot{e}^{\T}(t)Me(t) + e^{\T}(t)M\dot{e}(t)}{2\sqrt{e^{\T}(t)Me(t)}}. \nonumber 
\end{align}
Moreover,
\begin{align}
    &~~~~~\dot{e}^{\T}(t)Me(t)  + e^{\T}(t)M\dot{e}(t) \nonumber \\
    & = e^{\T}(t)\big[(A + BK)^{\T}M + M(A + BK)\big]e(t) \nonumber \\
    &~~~~ + 2e^{\T}(t)M\big[(AP - P\hat{A} + BQ)\hat{x}(t) \nonumber \\
    &~~~~~~~~~~~~~~~~~ + (AS + BR - P\hat{B})\hat{u}(t) - S\dot{\hat{u}}(t)\big].
\end{align}
Since $M$ and $K$ fulfill the Lyapunov inequality given in \eqref{eq: linear Lyapunov ineq}, and $Q$, $R$, and $S$ are selected according to \eqref{eq: linear QRS select 1}--\eqref{eq: linear QRS select 2}, the expression in \eqref{eq: linear dot Ve} can be bounded by
   {
\allowdisplaybreaks
\begin{align}
 &\dot{V}_{g}(x,\hat{x},\hat{u})  \le \frac{\begin{matrix}
-a_{1}e^{\T}Me + 2e^{\T}M[(AP -  P\hat{A} + BQ)\hat{x}(t)  \\
~~~~~+(AS + BR-P\hat{B})\hat{u}(t) -S\dot{\hat{u}}]
\end{matrix}}{2\sqrt{e^{\T}Me}} \nonumber \\
&\le -\frac{a_{1}}{2}\sqrt{e^{\T}Me } +  \frac{2e^{\T} M^{1/2}}{2\sqrt{e^{\T}Me}} \cdot \{M^{1/2} \cdot \nonumber \\
&~~[(AP -  P\hat{A} + BQ)\hat{x}(t)  +(AS + BR-P\hat{B})\hat{u}(t) -S\dot{\hat{u}}(t)] \} \nonumber \\
& \le -\frac{a_{1}}{2}\sqrt{e^{\T}Me } + \bar{r}_{1}\|\hat{x}(t)\| + \bar{r}_{2} \|\hat{u}(t)\| + \bar{r}_{3}\| \dot{\hat{u}}(t)\|,
\end{align}}where $\bar{r}_{1}$, $\bar{r}_{2}$, and $\bar{r}_{3}$ are given by \eqref{eq: linear bounded var1}--\eqref{eq: linear bounded var3}. Because $\bar{r}(\hat{x}, \hat{u}, \dot{\hat{u}}) \le \bar{r}_{\max}$ holds, the comparison lemma \citep{khalil2002nonlinear} implies that
\begin{equation}
    V_{g}(x,\hat{x},\hat{u}) \le e^{-\frac{a_{1}}{2}t} V_{g}(x_{0},\hat{x}_{0},\hat{u}_{0}) + (1-e^{-\frac{a_{1}}{2}t})\frac{2\bar{r}_{max}}{a_{1}}.
\end{equation}
Since $\frac{2\bar{r}_{\max}}{a_1} \le \varepsilon$, it follows that
\begin{equation}
     V_{g}(x(t),\hat{x}(t),\hat{u}(t)) \le \varepsilon,
\end{equation}
for all initial conditions $(x_{0},\hat{x}_{0},\hat{u}_{0}) \in \mathcal{V}_{g}$ and for all $t \ge t_{0}$.
Hence, condition~\textit{(iii)} in Definition~\ref{def:cs-gAAS} holds, implying that $V_{g}(x,\hat{x},\hat{u})$ qualifies as a simulation function according to Definition~\ref{def: simulation function}.
We now verify that $u(t) \in U$ by applying the interface function \eqref{eq: linear explicit interface function} with $\hat{u}(t) \in \hat{U}$.
Using $\lambda_{\min}(M)\|e\|^2 \le e^{\T}Me \le \varepsilon^2$, we obtain
\begin{align}
       \|u(t)\| &\le \|K\|\|e\| +\|Q \hat{x}\| + \|R\hat{u}\| \nonumber \\ &\le \|K\| \frac{\varepsilon}{\sqrt{\lambda_{min}(M)}} +\|Q \hat{x}\| + \|R\hat{u}\| \le b.
\end{align}
Therefore, the input constraint $u(t) \in U$ is satisfied based on \eqref{eq: input constraint}.
Thus, Theorem~\ref{theorem: linear valid simulation and interface} is proved.
\end{proof}

Next, the following corollary addresses the case in which the input $\hat{u}$ may be discontinuous. 
\begin{corollary} \label{corollary}
Let the jump value of $\hat{u}$ at time $\tau$ be denoted by $\hat{\delta}(\tau)$, as in Definition~\ref{def: jump value and set}.
Consider the same setting as in Theorem~\ref{theorem: linear valid simulation and interface}, except that condition (ii) is relaxed to permit $\hat{u}(t)$ to be discontinuous, in accordance with Definition~\ref{def: jump value and set}. Assume that, for any jump time $\tau > t_{0}$, the jump value $\hat{\delta}(\tau)$ satisfies
\begin{equation} \label{eq: jump set restrict}
\hat{\delta}(\tau)^{\T} S^{\T} M S \hat{\delta}(\tau) \le \left( \varepsilon - {\omega(\tau)} \right)^2,
\end{equation}
where
\begin{equation}
\omega(\tau) := e^{- \frac{a_1}{2} \tau} V_g(x_0, \hat{x}_0, \hat{u}_0) + \left( 1 - e^{- \frac{a_1}{2} \tau} \right) \frac{2\bar{r}_{max}}{a_1}.
\end{equation}
Then $V_g(x, \hat{x}, \hat{u})$ serves as a simulation function, and $\mathcal{V}_{g}$ defines an $\varepsilon$-gAAS relation when the interface function $u_g(x, \hat{x}, \hat{u})$ is employed.
\end{corollary}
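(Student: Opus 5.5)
Conditions (i) and (ii) of Definition~\ref{def:cs-gAAS} are handled exactly as in the proof of Theorem~\ref{theorem: linear valid simulation and interface}. That argument uses only $CP=\hat{C}$, $CS=0$, $C^{\T}C\preceq M$ and hypothesis (iii), and never the continuity of $\hat{u}$. The input-constraint bound $\|u(t)\|\le b$ also carries over verbatim once we know $V_g\le\varepsilon$ along the closed loop. So the whole task reduces to showing $V_g(x(t),\hat{x}(t),\hat{u}(t))\le\varepsilon$ for all $t\ge t_0$ when $\hat{u}$ has jumps. I take the jump times to be isolated, so they can be ordered $t_0<\tau_1<\tau_2<\cdots$. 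Between jumps $\hat{u}$ is continuous and the theorem's differential inequality applies.

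The plan is a piecewise argument on $[t_0,\tau_1),[\tau_1,\tau_2),\dots$. On each open interval of continuity, the computation of $\dot{V}_g$ in the proof of Theorem~\ref{theorem: linear valid simulation and interface} gives $\dot{V}_g\le -\frac{a_1}{2}V_g+\bar{r}_{\max}$. Since $x$ and $\hat{x}$ are absolutely continuous, only $e=x-P\hat{x}-S\hat{u}$ jumps at $\tau$, and it does so by
\[
e(\tau^+)-e(\tau^-)=-S\hat{\delta}(\tau).
\]
Because $V_g(e)=\|M^{1/2}e\|$ is a norm, the triangle inequality gives
\[
V_g(\tau^+)\le V_g(\tau^-)+\sqrt{\hat{\delta}(\tau)^{\T}S^{\T}MS\hat{\delta}(\tau)}.
\]
On the first interval the comparison lemma yields $V_g(\tau_1^-)\le\omega(\tau_1)$. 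Then \eqref{eq: jump set restrict} gives $V_g(\tau_1^+)\le\omega(\tau_1)+(\varepsilon-\omega(\tau_1))=\varepsilon$. This implicitly needs $\omega(\tau)\le\varepsilon$, which follows because $\omega$ is a convex combination of $V_g(x_0,\hat{x}_0,\hat{u}_0)\le\varepsilon$ and $2\bar{r}_{\max}/a_1\le\varepsilon$.

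The main obstacle is propagation across subsequent intervals. After the first jump, the initial value for the comparison lemma is $V_g(\tau_1^+)\le\varepsilon$, not $V_g(x_0,\hat{x}_0,\hat{u}_0)$. The resulting bound $e^{-\frac{a_1}{2}(t-\tau_1)}V_g(\tau_1^+)+\bigl(1-e^{-\frac{a_1}{2}(t-\tau_1)}\bigr)\frac{2\bar{r}_{\max}}{a_1}$ still stays below $\varepsilon$, which secures invariance of $\mathcal{V}_g$ on every interval. However, it does not by itself reproduce the specific value $\omega(\tau_2)$ appearing in \eqref{eq: jump set restrict}. I would try to handle this in one of two ways. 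The first option is to read $\omega(\tau)$ as the a priori bound on $V_g(\tau^-)$ and argue by induction that $V_g(\tau_k^-)\le\omega(\tau_k)$. This works for $k=1$ and, for the condition as stated, requires interpreting the exponent relative to the previous jump time (with $V_g(x_0,\hat{x}_0,\hat{u}_0)$ replaced by the post-jump value). The second option is to show that the weaker invariance argument suffices: at each jump, use $V_g(\tau_k^-)\le\omega_k$, where $\omega_k$ is the comparison bound restarted at $\tau_{k-1}$, and note that \eqref{eq: jump set restrict} with the restarted $\omega$ gives $V_g(\tau_k^+)\le\varepsilon$. I expect to adopt the restarted interpretation, since it is the one under which the induction closes cleanly.

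Once $V_g\le\varepsilon$ holds on every interval and at every jump, condition (iii) of Definition~\ref{def:cs-gAAS} follows. Combining this with the unchanged input bound $\|u_g\|\le b$ from \eqref{eq: input constraint}, $V_g$ is a simulation function and $\mathcal{V}_g$ is an $\varepsilon$-gAAS relation.
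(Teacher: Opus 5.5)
Your steps are the same as the paper's: the jump identity $e(\tau^{+})=e(\tau^{-})-S\hat{\delta}(\tau)$, the triangle inequality for the norm $\|M^{1/2}\cdot\|$, and the comparison lemma on intervals where $\hat{u}$ is continuous. You part ways with the paper in how the jump bound is chained. The paper writes $V_g(x(\tau^{-}),\hat{x}(\tau^{-}),\hat{u}(\tau^{-}))\le\omega(\tau)$ at every jump time and then says the theorem's argument can be replicated. You noticed, correctly, that the comparison lemma gives this bound only at the first jump time. Even there it needs $t_0=0$; otherwise $\tau$ must be replaced by $\tau-t_0$ in the exponent. Once a jump has pushed $V_g$ toward $\varepsilon$, nothing ties $V_g(\tau_2^{-})$ to $\omega(\tau_2)$.

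This is not a removable gap. The claim being proved is that $V_g\le\varepsilon$ along the closed loop for every $\hat{u}$ satisfying \eqref{eq: jump set restrict}, and that claim fails once there are two jumps. Take the case-study data: $Q=R=0$ and $S=[0;1]$, so $\bar{r}_1=\bar{r}_2=0$. Choose $x_0=P\hat{x}_0+S\hat{u}_0$, so $V_g(x_0,\hat{x}_0,\hat{u}_0)=0$. Let $\hat{u}$ be piecewise constant, so $\bar{r}_{\max}=0$ is admissible and $\omega\equiv 0$. Put two equal jumps $\hat{\delta}$ with $\|M^{1/2}S\hat{\delta}\|=\varepsilon$ at $\tau_1=1$ and $\tau_2=1+\eta$. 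Both satisfy \eqref{eq: jump set restrict}, yet
\[
e(\tau_2^{+})=-\bigl(e^{(A+BK)\eta}+I\bigr)S\hat{\delta},
\]
so $V_g(\tau_2^{+})\to 2\varepsilon$ as $\eta\to 0$. The input bound $\|u\|\le b$, which you correctly derive from $V_g\le\varepsilon$, is lost along with it. So neither your argument nor any other can prove the corollary as written. What your proposal actually proves is a corrected corollary.

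Your restarted condition is the right correction. State it as a new hypothesis, not as an ``interpretation'' of \eqref{eq: jump set restrict}; note that your two ``options'' are the same condition. Order the jump times as $t_0=\tau_0<\tau_1<\tau_2<\cdots$; isolatedness is itself an extra assumption that Definition~\ref{def: jump value and set} does not supply. Then require $\|M^{1/2}S\hat{\delta}(\tau_k)\|\le\varepsilon-\omega_k$, where
\[
\omega_k:=e^{-\frac{a_1}{2}(\tau_k-\tau_{k-1})}W_{k-1}+\Bigl(1-e^{-\frac{a_1}{2}(\tau_k-\tau_{k-1})}\Bigr)\frac{2\bar{r}_{\max}}{a_1}
\]
and $W_0:=V_g(x_0,\hat{x}_0,\hat{u}_0)$. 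For $W_k$ you have two choices:
\begin{enumerate}[(a)]
\item the actual post-jump value $V_g(\tau_k^{+})$, which is trajectory-dependent but checkable online;
\item the a~priori bound $W_k:=\omega_k+\|M^{1/2}S\hat{\delta}(\tau_k)\|$, which makes the condition depend only on $\hat{u}$ and the initial triple.
\end{enumerate}
Your induction then closes: the comparison lemma on $[\tau_{k-1},\tau_k)$, the triangle inequality at $\tau_k$, and your observation that $\omega_k\le\varepsilon$ as a convex combination. The corrected condition reduces to the paper's for a single jump with $t_0=0$.
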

\begin{proof}
The proof proceeds analogously to that of Theorem~\ref{theorem: linear valid simulation and interface} when checking conditions~\textit{(i)} and \textit{(ii)} in Definition~\ref{def:cs-gAAS}. 
Hence, it remains only to handle the discontinuities of $\hat{u}$ in relation to condition~\textit{(iii)}. 
Because the jump values $\hat{\delta}(\tau)$ are restricted by \eqref{eq: jump set restrict}, 
for any jump time $\tau>t_0$, since $x$ and $\hat{x}$ remain
continuous and
$\hat{\delta}(\tau)=\hat{u}(\tau^{+})-\hat{u}(\tau^{-})$, we have
\begin{equation}
    e(\tau^{+})
    =
    e(\tau^{-})-S\hat{\delta}(\tau).
\end{equation}
Therefore, by the triangle inequality,
\begin{align}
&V_g\bigl(x(\tau^{+}),\hat{x}(\tau^{+}),\hat{u}(\tau^{+})\bigr)
\nonumber\\
&=
\sqrt{
\bigl(e(\tau^{-})-S\hat{\delta}(\tau)\bigr)^{\T}
M
\bigl(e(\tau^{-})-S\hat{\delta}(\tau)\bigr)}
\nonumber\\
&\leq
\sqrt{e^{\T}(\tau^{-})Me(\tau^{-})}
+
\sqrt{
\hat{\delta}^{\T}(\tau)S^{\T}MS\hat{\delta}(\tau)}
\nonumber\\
&=
V_g\bigl(x(\tau^{-}),\hat{x}(\tau^{-}),\hat{u}(\tau^{-})\bigr)
+
\sqrt{
\hat{\delta}^{\T}(\tau)S^{\T}MS\hat{\delta}(\tau)}
\nonumber\\
&\leq
\omega(\tau)
+
\sqrt{
\hat{\delta}^{\T}(\tau)S^{\T}MS\hat{\delta}(\tau)}
\leq\varepsilon.
\end{align}
Hence, for any jump time
$\tau>t_0$,
\begin{equation}
V_g\bigl(x(\tau^{+}),\hat{x}(\tau^{+}),
\hat{u}(\tau^{+})\bigr)\leq\varepsilon.
\end{equation}
Consequently, by replicating the argument used to verify 
condition~\textit{(iii)} in the proof of Theorem~\ref{theorem: linear valid simulation and interface}, we derive
\begin{equation}
    V_{g}(x(t), \hat{x}(t), \hat{u}(t)) \le \varepsilon
\end{equation}
for all $(x_{0}, \hat{x}_{0}, \hat{u}_{0}) \in \mathcal{V}_{g}$ and for all $t \ge t_{0}$, when the interface function \eqref{eq: linear explicit interface function} is applied and $\hat{u}$ has discontinuities constrained by \eqref{eq: jump set restrict}.
Therefore, condition~\textit{(iii)} in Definition \ref{def:cs-gAAS} holds, and this completes the proof of Corollary~\ref{corollary}.
\end{proof}

\begin{remark}
Determining matrices $S$ and $R$ that satisfy $CS = 0$ and $AS + BR - P\hat{B} = 0$, thereby enforcing $\bar{r}_{2} = 0$ in \eqref{eq: linear bounded var2}, is equivalent to solving  
\[
\begin{bmatrix}
    A & B \\
    C & 0
\end{bmatrix}
\begin{bmatrix}
    S \\ R
\end{bmatrix}
=
\begin{bmatrix}
    P\hat{B} \\ 0
\end{bmatrix}.
\]
A solution to this equation exists whenever the control system $\Sigma$ in \eqref{eq: linear-concrete} has no invariant zero at zero.  
Equivalently, there is no nonzero pair $(x, u) \neq (0,0)$ such that  
$A x + B u = 0$ and $C x = 0$,
meaning that the system output cannot remain identically zero while the state and input evolve according to the dynamics at zero frequency.  
In hierarchical control, if the concrete system meets this condition, it guarantees that every nontrivial trajectory $(x(t), u(t))$ produces an output that can be distinguished from zero.  
Observe that choosing $S = 0$ recovers the result of \citep{girard2009hierarchical}, so our result represents a strict generalization.
\end{remark}
\vspace{-3pt}
\section{Case study} \label{case study}
\vspace{-5pt}
In this section, we examine the concrete system $\Sigma$ and the abstract system $\hat{\Sigma}$, represented by \eqref{case: linear concrete model} and \eqref{case: linear abstract model}, respectively. We then apply the result of Theorem \ref{theorem: linear valid simulation and interface} to establish a $0.5$-gAAS relation, that is, $\hat{\Sigma}_{} \preceq_{g\mathcal{AS}}^{0.5} \Sigma_{}$.
Let
\begin{equation}
    M = \begin{bmatrix} 3.9544 & 1.1805 \\[2pt] 1.1805 & 4.2262 \end{bmatrix}, ~
K = [-1.3298, -1.4108], ~
a_{1} = 0.5,
\end{equation}
be chosen so that \eqref{eq: linear Lyapunov ineq} is satisfied.  
Then, applying Assumption~\ref{def: linear conditions} together with Theorem~\ref{theorem: linear valid simulation and interface}, we obtain
$Q = 0$, $R = 0$, $S =[0;1]$, $P=[1;0]$, and $\|\dot{\hat{u}}\| \le 0.0486$,
with $\bar{r}_{1} = \bar{r}_{2} = 0$ and $\bar{r}_{3} = 2.0558$ in~\eqref{eq: linear bounded var1}--\eqref{eq: linear bounded var3}.

Consider the initial condition
$(x(0), \hat{x}(0), \hat{u}(0)) = (40, 40.1,$ $ -0.0401) \in 
\{ (x, \hat{x}, \hat{u}) \in X \times \hat{X} \times \hat{U} \mid V_{g}(x, \hat{x}, \hat{u}) \le 0.5 \}$,
where $V_{g}$ is given in~\eqref{eq: linear simulation function}.  
We consider an abstract controller of the form $\hat{u} = -\hat{k} \hat{x}$, deliberately constructed to be discontinuous, in order to illustrate how the proposed control refinement method performs when the abstract system is subject to discontinuous inputs. We set
\[
\hat{k} =
\begin{cases}
0.001  & \hat{x} \in [30, \hat{x}(0)],
~~~~0.0013 ~~~ \hat{x} \in [20, 30), \\[2pt]
0.002  & \hat{x} \in [10, 20), 
~~~~~~~0.004 ~~~~~ \hat{x} \in [0, 10),
\end{cases}
\]
so that both $\|\dot{\hat{u}}\| \le 0.0486$ and~\eqref{eq: jump set restrict} hold.
Consequently, the trajectories of $y$, $\hat{y}$ and $\|y - \hat{y}\|$ are depicted in Fig.~\ref{fig: linear_I}, showing that $\hat{\Sigma}_{} \preceq_{g\mathcal{AS}}^{0.5} \Sigma_{}$, since the output deviation satisfies $\|y - \hat{y}\| \le 0.5$ under the discontinuous input.

Furthermore, to compare our method with that of \citep{girard2009hierarchical}, we choose the control input $\hat{u} = 0.02t$ for $t \in [0,50]$ and $\hat{u} = 1$ for $t \in (50,+\infty)$ so as to enforce a $0.5$-gAAS relation from the initial condition $(x(0), \hat{x}(0), \hat{u}(0)) = (40, 40.1, 0)$.
Under this setup, since $ \|\hat{u}\| \le 1$, \eqref{eq: input constraint} implies that $\|u \| \le 0.5690$.
It can then be checked that no $\hat{\Sigma}_{} \preceq_{\mathcal{AS}}^{0.5} \Sigma_{}$ exists.
The corresponding outputs $y$, $\hat{y}$, and the error $\|y - \hat{y}\|$ for both approaches are plotted in Fig.~\ref{fig: linear_I_compare} and Fig.~\ref{fig: linear_I_papas}.
As illustrated there, the method of \citep{girard2009hierarchical} does not guarantee the $0.5$-gAAS relation for this controller $\hat{u}$, because it fails to ensure that the output error satisfies $\|y - \hat{y}\| \le 0.5$ at all times.
\begin{figure}[htbp]
\vspace{0pt}
    \centering
    \subfigure[The $y$ and $\hat{y}$]
    {
    \includegraphics[width=1\hsize]{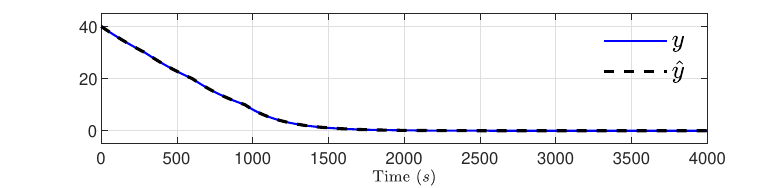}
    \label{fig: linear_I_y_yhat}
    }
    \subfigure[$\|y - \hat{y} \|$]
    {
    \includegraphics[width=1\hsize]{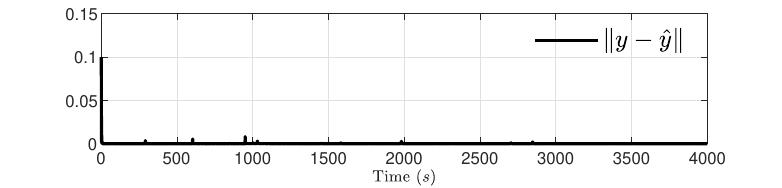}
    \label{fig: linear_I_d}
    }
    \caption{The simulation result of the proposed approach. }
    \label{fig: linear_I} 
\end{figure}

\begin{figure}[htbp]
\vspace{-2pt}
    \centering
    \subfigure[The $y$ and $\hat{y}$]
    {
    \includegraphics[width=1\hsize]{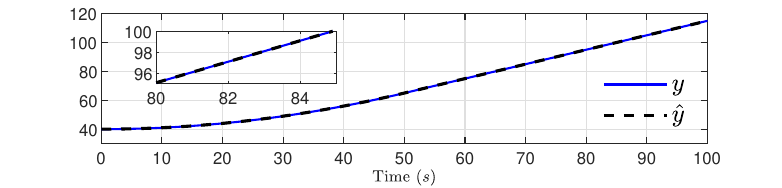}
    \label{fig: linear_I_y_yhat_compare}
    }\vspace{-1pt}
    \subfigure[$\|y - \hat{y} \|$]
    {
     \includegraphics[width=1\hsize]{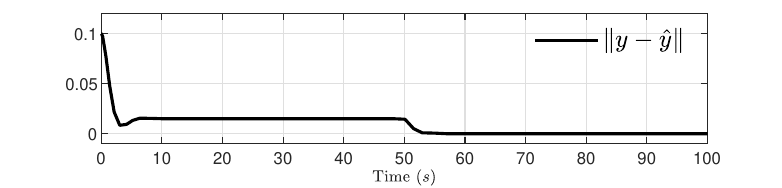}
    \label{fig: linear_I_d_compare}
    }\vspace{-1pt}
    \caption{The simulation result of the proposed approach. }
    \vspace{3pt}
    \label{fig: linear_I_compare} 
\end{figure}

\begin{figure}[htbp]
\vspace{-2pt}
    \centering
    \subfigure[The $y$ and $\hat{y}$]
    {
    \includegraphics[width=1\hsize]{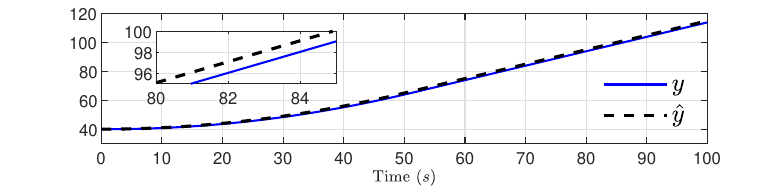}
    \label{fig: linear_I_y_yhat_papas}
    }
    \subfigure[$\|y - \hat{y} \|$]
    {
    \includegraphics[width=1\hsize]{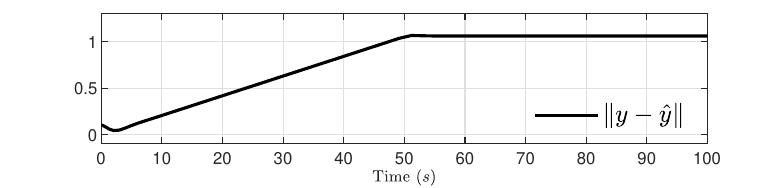}
    \label{fig: linear_I_d_papas}
    }
    \caption{The simulation result by implementing control refinement approach in \citep{girard2009hierarchical}. }
    \vspace{3pt}
    \label{fig: linear_I_papas} 
\end{figure}

\section{Conclusion} \label{conclusion}
\vspace{-3pt}
This paper presents the general approximate alternating simulation relation (\emph{$\varepsilon$-gAAS relation}) for continuous-time systems and investigates its characteristics. Building on this relation, a hierarchical control strategy is proposed for continuous-time linear systems. The performance of the approach is illustrated through case studies and benchmarked against existing methods. Future research will focus on extending the control refinement framework to continuous-time nonlinear systems within the $\varepsilon$-gAAS setting.


\bibliography{ifacconf}             
\end{document}